\documentclass[journal,12pt,onecolumn]{IEEEtran}

\usepackage{setspace}
\doublespace
\usepackage{graphics}
\usepackage{rotating}
\usepackage{amsfonts}
\usepackage{amsmath}
\usepackage{subfigure}

\graphicspath{{Figures/}} \DeclareGraphicsExtensions{.eps,.ps}

\hyphenation{op-tical net-works semi-conduc-tor IEEEtran}

\begin{document}


\title{Bit-Interleaved Coded Multiple Beamforming \\with Imperfect CSIT}


\author{Ersin Sengul~\IEEEmembership{Member, IEEE},
    Hong Ju Park, \\and
        Ender Ayanoglu~\IEEEmembership{Fellow, IEEE}\\
\authorblockA{Center for Pervasive Communications and Computing\\
Department of Electrical Engineering and Computer Science\\
University of California, Irvine\\
Irvine, California 92697-2625\\
Email: esengul@uci.edu, hjpark@uci.edu, ayanoglu@uci.edu}}


\maketitle

%


\begin{abstract}
This paper addresses the performance of bit-interleaved coded
multiple beamforming (BICMB) \cite{akayWCNC06}, \cite{akayTC06BICMB}
with imperfect knowledge of beamforming vectors. Most studies for
limited-rate channel state information at the transmitter (CSIT)
assume that the precoding matrix has an invariance property under an
arbitrary unitary transform. In BICMB, this property does not hold.
On the other hand, the optimum precoder and detector for BICMB are
invariant under a {\em diagonal\/} unitary transform. In order to
design a limited-rate CSIT system for BICMB, we propose a new
distortion measure optimum under this invariance. Based on this new
distortion measure, we introduce a new set of centroids and employ
the generalized Lloyd algorithm for codebook design. We provide
simulation results demonstrating the performance improvement
achieved with the proposed distortion measure and the
codebook design for various receivers with linear detectors. We show that although these receivers have the same performance for perfect CSIT, their performance varies under imperfect CSIT.\\\\
\end{abstract}

\section{Introduction}\label{sec:intro}
It is well-known that multiple-input multiple-output (MIMO) systems
enhance the throughput of wireless systems, with an increase in
reliability and spectral efficiency \cite{foschini96},
\cite{foschini98}, \cite{raleigh98}. While the advantages of MIMO
architectures are attainable when only the receiver side knows the
channel, the potential gains can be further improved when the
transmitter has some knowledge of the channel, which is known as
channel state information at the transmitter (CSIT). CSIT can be
used to improve diversity order or array gain of a MIMO wireless
system. In this work, we are interested in multi-stream precoding to
achieve MIMO spatial multiplexing. In this paper ``spatial
multiplexing order'' refers to the number of multiple symbols
transmitted, as in \cite{paulraj}. This term is different than
``spatial multiplexing gain'' defined in \cite{tse03}. Throughout
the paper, we will employ the terminology \emph{single beamforming}
and \emph{multiple beamforming} to refer to single- and multi-stream
precoding, respectively \cite{palomarThesis},
\cite{akay05TCOM_BeamformingLetter}.

Precoders based on perfect CSIT are designed in
\cite{sampath01tcom}, \cite{scaglione02designs}, \cite{palomar03}
for many different design criteria. The majority of the designs
include the channel eigenvectors which are obtained through the
singular value decomposition (SVD) of the channel. It is
well-known that it may not be practical to have perfect CSIT. In
this paper, we will design a system with limited CSIT when the
channel obeys the standard block fading (quasi-static) model. In
this model, the channel may change from block to block, but
remains constant during the transmission of a block. This model is
commonly used in the design and simulation of broadband wireless
systems.

Recently, limited CSIT feedback techniques have been introduced to
achieve a performance close to the perfect CSIT case. In these, a
codebook of precoding matrices is known both at the transmitter and
receiver. The receiver selects the precoding matrix that satisfies a
desired criterion, and only the index of the precoding matrix is
sent back to the transmitter. Initial work on limited feedback
systems concentrated on single beamforming where a single symbol is
transmitted along a quantized version of the optimal beamforming
direction. Authors of \cite{narula98} analyzed single beamforming in
a multi-input single-output (MISO) setting where they designed
codebooks via the generalized Lloyd algorithm. The relationship
between codebook design for quantized single beamforming and
Grassmannian line packing was observed in \cite{mukkavilli01LF},
\cite{love_grasmannian03} for i.i.d. Rayleigh fading channels. This
connection enabled the authors in \cite{mukkavilli01LF},
\cite{love_grasmannian03} to leverage the work already carried out
for optimal line packing in the mathematics literature. Authors in
\cite{xiaWelch} proposed a systematic way of designing good
codebooks for single beamforming inspired from \cite{hochwald99}.
Rate-distortion theory tools were used in \cite{xia_LF_SB_VQ} to
analyze single beamforming performance when the generalized Lloyd
algorithm is used. Random vector quantization (RVQ) technique, where
a random codebook is generated for each channel realization, was
used to analyze single beamforming in an asymptotic scenario
\cite{santipachRVQ}. Later, results were generalized to multiple
beamforming \cite{love03LF-SM}, \cite{rohVQ_MB}. The results in
\cite{love03LF-SM} showed that there is a relation between codebook
design for multiple beamforming and Grassmannian subspace packing.
However the results in \cite{love03LF-SM} are specific to uncoded
multiple beamforming. Most papers considered the unitary or
semi-unitary constraint on the precoder since the optimal linear
precoder is unitary with perfect CSIT for linear receiver
architectures \cite{palomarThesis}. In such a case, it is possible
to exploit the properties of unitary matrices and parameterize the
optimal precoder into a set of angles to be quantized
\cite{sadrabadiGivens}, \cite{roh04Givens}. hat with the

It has been shown that for a MIMO system with $N$ transmit and $M$
receive antennas, it is possible to achieve full spatial diversity
of $NM$, independent of the number of streams $1\leq S
\leq\min(N,M)$ transmitted over quasi-static Rayleigh flat fading
channels. One possible system achieving this limit is the so-called
bit-interleaved coded multiple beamforming (BICMB)
\cite{akayWCNC06}, \cite{akayTC06BICMB}. Design criteria for the
interleaver and the convolutional encoder which guarantee full
diversity and full spatial multiplexing are provided in
\cite{akayWCNC06}, \cite{akayTC06BICMB}. Previously, bit-interleaved
coded modulation (BICM) \cite{bicm}, \cite{zehavi} was employed in
single- and multi-antenna systems without utilizing CSIT
\cite{akay04VTC-F-BO}, \cite{akay04VTC-F-BSO}, \cite{lee03},
\cite{rende&wong}. In general, BICMB requires perfect knowledge of
only channel eigenvectors at the transmitter, i.e., does not need
the channel gains (eigenvalues) at the transmitter. It has linear
detection complexity and needs a simple soft-input Viterbi decoder.
It also achieves full diversity without any adaptation for the
number of streams.

In this paper, the goal is to design a limited feedback scheme for
BICMB. We first deal with codeword selection criterion assuming that
there is already a given codebook. We provide a new optimal
distortion measure for the selection of the best precoder from the
codebook. This new distortion measure is due to the non-uniqueness
property of the SVD \cite{horn}. We then calculate a centroid for
this new distortion measure. We analyze the performance of the
proposed distortion measure for different receiver structures
through extensive simulations. For comparison purposes, we first use
a randomly generated codebook. Next, we utilize the generalized
Lloyd's algorithm \cite{lindeLGB} to design better codebooks. For
this new codebook, we employ the minimum mean square error (MMSE)
and the zero-forcing (ZF) receivers as well as a new receiver.

\textbf{Notation:} $N$ is the number of transmit antennas, $M$ is
the number of receive antennas. The symbol $S$ denotes the total
number of symbols transmitted at a time (spatial multiplexing order,
in other words the total number of streams used). The superscripts
$(\cdot)^\dag$, $(\cdot)^H$, $(\cdot)^T$, $(\cdot)^*$, and the
symbol $\forall$ denote the pseudoinverse, Hermitian, transpose,
complex conjugate, and for-all respectively.\\

\section{System Model}\label{sec:QBICMBsystem}

In the limited feedback context, authors of \cite{love03LF-SM}
showed that, in their uncoded system, for both the ZF and the MMSE
receiver the optimal precoder is in the form of $\mathbf{VQ}$,
where is $\mathbf{V}$ is the channel right singular matrix and
$\mathbf{Q}$ is any unitary matrix. This characterization enabled
authors to see the direct relation between codebook design for
multiple beamforming and Grassmannian subspace packing. However,
as we will show, in our system, multiplication of the channel
right singular matrix $\mathbf{V}$ with a general unitary matrix
$\mathbf{Q}$, and employing $\mathbf{VQ}$ as the precoding matrix
causes performance degradation. A new selection criterion and
codebook design procedure is needed for limited feedback in BICMB.

In BICMB, the output bits of a binary convolutional encoder are
interleaved and then mapped over a signal set $\chi \subseteq
\mathbb{C}$ of size $|\chi|=2^m$ with a binary labeling map $\mu :
\{0,1\}^m \to \chi$. We use the same interleaver that was previously
employed for the perfect CSIT case in \cite{akayTC06BICMB}. The
interleaver is not unique and not necessarily the optimal one, but
satisfies the design criterion and enables the system to have full
diversity when perfect CSIT is available. Gray encoding is used to
map the bits onto symbols. During transmission, the code sequence
$\underbar{c}$ is interleaved by $\pi$, and then mapped onto the
signal sequence $\underbar{x} \in \chi$.

Let $\mathbf{H}$ denote the quasi-static, flat fading $M \times N$
MIMO channel, where $M$ and $N$ are the number of receive and
transmit antennas, respectively, and assume perfect timing,
synchronization and sampling. In this paper, we assume that the
transmitter employs multiple beamforming prior to the transmission
of the complex baseband symbols. When $S$ symbols are transmitted
at the same time, the system input-output relation between
transmitted and received baseband complex symbols can be written
as
\begin{align}
\mathbf{y} &= \mathbf{HV}_L\mathbf{x}+\mathbf{n}
\label{eq:InOutBeamLF}
\end{align}
where $\mathbf{x}$ is an $S \times 1$ vector of symbols to be
transmitted, $\mathbf{n}$ is an $M \times 1$ additive white
Gaussian noise vector whose elements have zero mean and variance
$N_0 = N/SNR$, and $\mathbf{V}_L$ is an $N\times S$ precoding
matrix, which is dependent on the instantaneous channel
realization. The total power transmitted is scaled as $N$. The
channel matrix elements are modeled as i.i.d. zero-mean,
unit-variance complex Gaussian random variables. Consequently, the
received average signal-to-noise ratio is $SNR$.

We assume that the receiver selects a precoder matrix from a finite
set of beamforming matrices and sends the index of the selected
precoder through an error-free feedback link without any delay.
Precoded symbols are transmitted over the channel and at the
receiver a linear equalizer is used as a detector prior to the
Viterbi decoder. Our aim is to investigate the effects of imperfect
CSIT on the BICMB system compared to the perfect CSIT scenario and
therefore, we concentrate on a linear detector followed by soft
input non-iterative Viterbi decoder as in \cite{akayWCNC06},
\cite{akayTC06BICMB}. In this paper we do not consider nonlinear
detectors or iterative decoding techniques.

The bit interleaver of BICMB can be modeled as $\pi : k' \to
(k,s,i)$ where $k'$ denotes the original ordering of the coded bits
$c_{k'}$, $k$ denotes the time ordering of the signals $x_{k,s}$
transmitted, $s$ denotes the subchannel used to transmit $x_{k,s}$,
and $i$ indicates the position of the bit $c_{k'}$ on the symbol
$x_{k,s}$. Let $\chi_b^i$ denote the subset of all signals $x \in
\chi$ whose label has the value $b \in \{0,1\} $ in position $i$.
The bit metrics, i.e., $\gamma^i (y_{k,s},c_{k'})$, are dependent on
the receiver structure and will be revisited in Section
\ref{sec:Detectors_BICMB_LF}. The Viterbi decoder at the receiver
makes decisions according to the rule
\begin{equation}
\underline{\boldsymbol{\hat{c}}} = \arg
\min_{\underline{\boldsymbol{c}} \in \mathcal{C}} \sum_{k'}
\gamma^i(y_{k,s},c_{k'}). \label{eq:MLrule}
\end{equation}

\section{Bit-Interleaved Coded Multiple Beamforming}\label{sec:LF_BICMB}

\subsection{Background on SVD}\label{sec:SVD}

As stated previously, the work in this paper depends on the fact
that SVD has an invariance property under diagonal unitary
transformation. We provide a formal description of this fact below
\cite{horn}.

\newtheorem{SVD}{Theorem}
\begin{SVD}
If $\mathbf{H} \in \mathbb{C}^{M\times N}$ has rank $k$, then it
may be written in the form $\mathbf{H = U\Sigma V}^H$, where
$\mathbf{U}$ and $\mathbf{V}$ are unitary matrices whose columns
are the left and right singular vectors of $\mathbf{H}$. The
matrix $\mathbf{\Sigma}=[\sigma_{ij}] \in \mathbb{R}^{M\times N}$
has $\sigma_{ij}=0$ for all $i\neq j$, and $\sigma_{11}\geq \cdots
\geq \sigma_{kk}> 0$, and $\sigma_{k+1,k+1} = $ $\cdots
=\sigma_{qq}=0$, where $q = \min(N,M)$. The numbers
$\sigma_{ii}\equiv\sigma_i, i=1,2, \ldots,q$ are the nonnegative
square roots of the eigenvalues of $\mathbf{HH}^H$, and hence are
uniquely determined. The columns of $\mathbf{U}$ are eigenvectors
of  $\mathbf{H}\mathbf{H}^H$ and the columns of $\mathbf{V}$ are
eigenvectors of  $\mathbf{H}^H\mathbf{H}$. If $N\leq M$ and if
$\mathbf{H}^H\mathbf{H}$ has distinct eigenvalues, then
$\mathbf{V}$ is determined up to a right diagonal unitary matrix
$\mathbf{D} =$ diag$(e^{j\theta_1}, e^{j\theta_2}, \ldots,
e^{j\theta_N})$ with all $\theta_i \in [0,2\pi)$; that is, if
$\mathbf{H = U}_1\mathbf{\Sigma}
\mathbf{V}_1^H=\mathbf{U}_2\mathbf{\Sigma}\mathbf{V}_2^H$, then
$\mathbf{V}_2 = \mathbf{V}_1 \mathbf{D}$.
\end{SVD}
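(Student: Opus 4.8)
The plan is to treat the statement in two parts: the existence of the factorization together with the singular-value and eigenvector characterizations, which are classical, and then the final uniqueness-up-to-diagonal claim, which is the part that actually matters for the sequel.

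First I would establish existence by working with the Hermitian positive-semidefinite matrix $\mathbf{H}^H\mathbf{H} \in \mathbb{C}^{N\times N}$. By the spectral theorem it admits an orthonormal eigenbasis, collected as the columns of a unitary $\mathbf{V}$, with real nonnegative eigenvalues that I would order as $\lambda_1 \ge \cdots \ge \lambda_N \ge 0$; setting $\sigma_i = \sqrt{\lambda_i}$ gives the diagonal of $\mathbf{\Sigma}$, with exactly $k$ of them positive since $\mathrm{rank}(\mathbf{H}^H\mathbf{H}) = \mathrm{rank}(\mathbf{H}) = k$. For each index with $\sigma_i > 0$ I would define $\mathbf{u}_i = \mathbf{H}\mathbf{v}_i/\sigma_i$ and verify directly that these are orthonormal, using $\mathbf{v}_i^H \mathbf{H}^H\mathbf{H}\mathbf{v}_j = \lambda_j \delta_{ij}$, then extend them to a full orthonormal basis forming $\mathbf{U}$; substituting back recovers $\mathbf{H} = \mathbf{U}\mathbf{\Sigma}\mathbf{V}^H$. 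The singular values are then uniquely pinned down because $\mathbf{H}\mathbf{H}^H = \mathbf{U}\mathbf{\Sigma}\mathbf{\Sigma}^H\mathbf{U}^H$ and $\mathbf{H}^H\mathbf{H} = \mathbf{V}\mathbf{\Sigma}^H\mathbf{\Sigma}\mathbf{V}^H$ exhibit the $\sigma_i^2$ as the eigenvalues of $\mathbf{H}\mathbf{H}^H$ (equivalently of $\mathbf{H}^H\mathbf{H}$), which are intrinsic to $\mathbf{H}$; the same two identities simultaneously show that the columns of $\mathbf{U}$ and $\mathbf{V}$ are eigenvectors of $\mathbf{H}\mathbf{H}^H$ and $\mathbf{H}^H\mathbf{H}$ respectively.

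For the final claim I would argue as follows. Suppose $\mathbf{H} = \mathbf{U}_1\mathbf{\Sigma}\mathbf{V}_1^H = \mathbf{U}_2\mathbf{\Sigma}\mathbf{V}_2^H$ are two such factorizations. Forming $\mathbf{H}^H\mathbf{H}$ from each gives $\mathbf{V}_1\mathbf{\Lambda}\mathbf{V}_1^H = \mathbf{V}_2\mathbf{\Lambda}\mathbf{V}_2^H$, where $\mathbf{\Lambda} = \mathbf{\Sigma}^H\mathbf{\Sigma} = \mathrm{diag}(\sigma_1^2, \ldots, \sigma_N^2)$; hence the columns of both $\mathbf{V}_1$ and $\mathbf{V}_2$ are orthonormal eigenbases of the same matrix $\mathbf{H}^H\mathbf{H}$, with the $i$-th column associated to the eigenvalue $\sigma_i^2$. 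Here the hypothesis $N\le M$ places the argument in the regime where $\mathbf{\Sigma}^H\mathbf{\Sigma} = \mathbf{\Lambda}$ lists exactly the $N$ eigenvalues $\sigma_1^2 \ge \cdots \ge \sigma_N^2$ of the $N\times N$ matrix $\mathbf{H}^H\mathbf{H}$, while the hypothesis that these eigenvalues are distinct forces the decreasing ordering to be unique and every eigenspace to be one-dimensional. Consequently the $i$-th column of $\mathbf{V}_2$ must be a scalar multiple of the $i$-th column of $\mathbf{V}_1$; since both are unit vectors the scalar has modulus one and can be written $e^{j\theta_i}$ with $\theta_i \in [0,2\pi)$, which yields $\mathbf{V}_2 = \mathbf{V}_1\mathbf{D}$ with $\mathbf{D} = \mathrm{diag}(e^{j\theta_1}, \ldots, e^{j\theta_N})$.

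The conceptual heart, and the only place where the hypotheses are truly used, is the passage from ``distinct eigenvalues'' to ``one-dimensional eigenspaces, hence eigenvectors unique up to a unit-modulus phase.'' I expect the main obstacle to be stating this cleanly rather than deeply: one must be careful that the decreasing-order convention unambiguously matches the $i$-th column of $\mathbf{V}_1$ to the $i$-th column of $\mathbf{V}_2$, a matching that fails precisely when eigenvalues coincide, since a repeated eigenvalue would permit an arbitrary unitary mixing within its eigenspace rather than a mere diagonal phase. It is also worth emphasizing that the phases $\theta_i$ are genuinely free, so the diagonal unitary $\mathbf{D}$ is the exact residual ambiguity, not merely an upper bound on it. This last observation is exactly what the paper exploits: the precoder is invariant under right multiplication by $\mathbf{D}$ but not under a general unitary $\mathbf{Q}$, which is why a new distortion measure is required.
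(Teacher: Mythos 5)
Your proof is correct, but note that the paper does not actually prove this theorem: its entire ``proof'' is the citation \emph{See [horn]}, deferring to Horn and Johnson's \emph{Matrix Analysis}. What you have written is the standard self-contained argument --- spectral theorem on $\mathbf{H}^H\mathbf{H}$ to build $\mathbf{V}$ and the singular values, the construction $\mathbf{u}_i = \mathbf{H}\mathbf{v}_i/\sigma_i$ extended to a unitary $\mathbf{U}$, and then the observation that under $N \le M$ and distinct eigenvalues each eigenspace of $\mathbf{H}^H\mathbf{H}$ is one-dimensional, so corresponding columns of $\mathbf{V}_1$ and $\mathbf{V}_2$ can differ only by a unimodular phase. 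This is essentially the textbook route the cited reference takes, so there is no substantive divergence to report; your version simply makes explicit what the paper outsources. Your closing remark correctly identifies the one point the paper actually relies on downstream: the residual ambiguity is exactly a diagonal unitary $\mathbf{D}$ (the phases are genuinely free, since replacing $\mathbf{V}$ by $\mathbf{V}\mathbf{D}$ and $\mathbf{U}$ by $\mathbf{U}\hat{\mathbf{D}}$ with $\hat{\mathbf{D}}$ extending $\mathbf{D}$ preserves $\mathbf{U}\mathbf{\Sigma}\mathbf{V}^H$), not a general unitary $\mathbf{Q}$, which is what motivates the paper's distortion measure and the remark following the theorem that $\overline{\mathbf{V}}_2 = \overline{\mathbf{V}}_1\mathbf{D}$ and $\overline{\mathbf{U}}_2 = \overline{\mathbf{U}}_1\mathbf{D}$.
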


\begin{proof}
See \cite{horn}.
\end{proof}

The conditions of the theorem above hold for the system in this
paper, and therefore there are infinitely many right singular
matrices for a given channel realization. Note that when $S\leq q$
streams are transmitted, the first $S$ columns of $\mathbf{V}$,
i.e., $\overline{\mathbf{V}}$, are employed. Therefore, if
$\mathbf{H = U}_1\mathbf{\Sigma}
\mathbf{V}_1^H=\mathbf{U}_2\mathbf{\Sigma}\mathbf{V}_2^H$, then
$\overline{\mathbf{V}}_2 = \overline{\mathbf{V}}_1 \mathbf{D}$ and
$\overline{\mathbf{U}}_2 = \overline{\mathbf{U}}_1 \mathbf{D}$,
where $\mathbf{D}$ is any $S\times S$ diagonal unitary matrix.

\subsection{Selection Criteria}\label{sec:SC_BICMB_LF}

In this section, we assume that there exists a codebook and we
wish to find a criterion to choose the best approximation to
$\mathbf{V}$ from the codebook $\mathbb{V}=
\{\mathbf{\hat{V}}_i\}_{i=1}^C$, where $C$ is the codebook size.
One could potentially use the well-known Euclidean metric, however
the property described in Theorem 1 complicates the problem.\\

\emph{Selection Criterion -  Euclidean (SC-E) :} The receiver
selects $\mathbf{V}_L$ such that
\begin{align}
\mathbf{V}_L= \mbox{ }\underset{\mathbf{\hat{V}}_i \in
\mathbb{V}}{\arg\min} \mbox{ }||\mbox{
}\mathbf{V}-\mathbf{\hat{V}}_i\mbox{ }||_F^2.
\label{eq:Eclidean_V}
\end{align}
This selection criterion aims to find the codebook element closest
to the optimal beamforming matrix $\mathbf{V}$. It can be argued
that this criterion asymptotically diagonalizes the system as the
number of feedback bits goes to infinity.

However, the property in Theorem 1 makes straightforward
application of (\ref{eq:Eclidean_V}) nonpractical. This can be
explained with the aid of Figure \ref{fig:BICMB_sets}. Assume that
an application of SVD for a given instantiation of the
$\mathbf{H}$ matrix yields a $\mathbf{V}$ matrix. Assume that when
$\mathbf{V}$ is multiplied by all diagonal unitary matrices
$\mathbf{D}$, one gets the set $\mathbb{S}_{\mathbf{V}}$ in Figure
\ref{fig:BICMB_sets}. It should be clear that the closest member
of $\mathbb{V}$ to $\mathbf{V}$ is not necessarily the closest
member of $\mathbb{V}$ to $\mathbb{S}_{\mathbf{V}}$. As a result,
one needs to modify (\ref{eq:Eclidean_V}) such that the minimum
distance between two sets $\mathbb{V}$ and
$\mathbb{S}_{\mathbf{V}}$ can be calculated. A way to accomplish
this is
\begin{align}
\mathbf{V}_L= \mbox{ }\underset{{\mathbf{\hat{V}}_i \in
\mathbb{V}, \mathbf{D} \in \mathbb{D}}} {\arg\min} \mbox{
}||\mbox{ }\mathbf{VD}-\mathbf{\hat{V}}_i\mbox{ }||_F^2
\label{eq:Eclidean_VD}
\end{align}
where $\mathbb{D}$ stands for the set of all \emph{diagonal
unitary}
matrices.\\
\begin{figure}[pg6]
\centering \includegraphics[width =0.60 \linewidth]{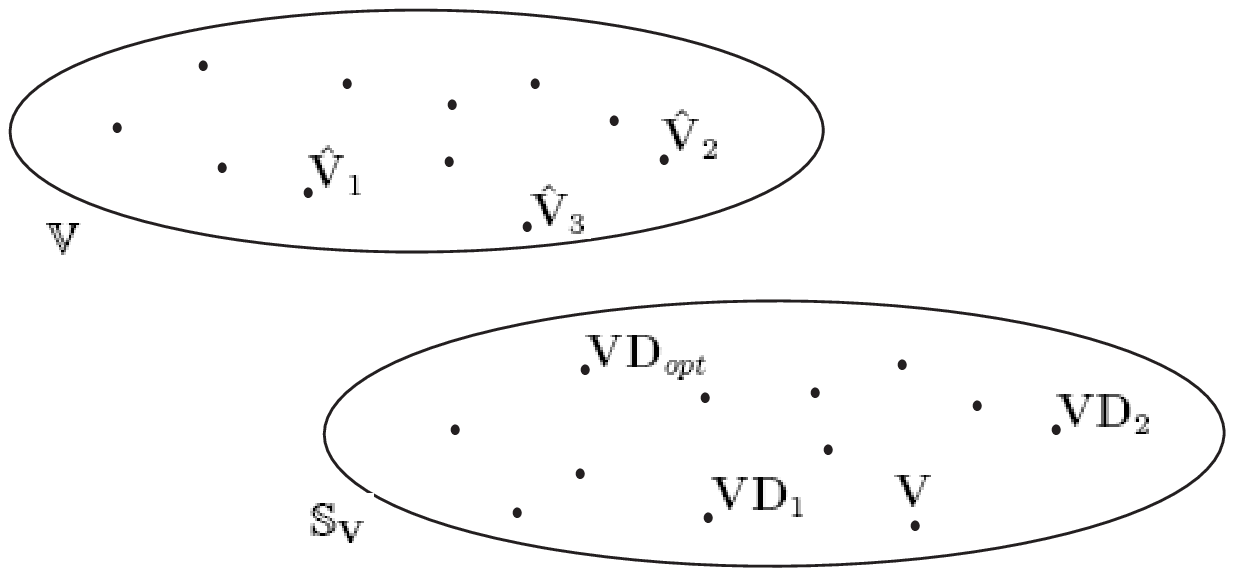}
\caption{Sets illustrating the the codebook elements,$\mathbb{V}$,
and unitary matrices from SVD, $\mathbb{S}_{\mathbf{V}}$.}
\label{fig:BICMB_sets}
\end{figure}

\emph{ Proposition 1}: The minimization in (\ref{eq:Eclidean_VD})
is equivalent to the following minimization problem
\begin{align}
\mathbf{V}_L=\underset{\mathbf{\hat{V}}_i \in
\mathbb{V}}{\arg\min} \mbox{ }|| \mbox{
}\mathbf{VD}^{opt}-\mathbf{\hat{V}}_i\mbox{ }||_F^2.
\label{eq:Eclidean_Vopt}
\end{align}
The $k^{th}$ diagonal element of the diagonal matrix
$\mathbf{D}^{opt}$ is given as
\begin{align}
\theta_k^{opt}=-\phi_k \quad \quad{ }k=1,2,\ldots,S
\label{eq:Eclidean_Vopt2}
\end{align}
where $0\leq\phi_k<2\pi$ is the phase of
$\mathbf{\hat{v}}_{ik}^H\mathbf{v}_k$ and where the vectors
$\mathbf{\hat{v}}_{ik}$ and $\mathbf{v}_k$ correspond to the
$k^{th}$ column of $\mathbf{\hat{V}}_i$ and $\mathbf{V}$,
respectively.

\begin{proof}
Without loss of generality, let $N\leq M$ and $S=N$ streams be
used. For the other cases, the matrices are replaced by their
first $S$ columns. The term to be minimized in
(\ref{eq:Eclidean_VD}) can be expressed as
\begin{align}
||\mbox{ }\mathbf{VD}-\mathbf{\hat{V}}_i \mbox{ } ||_F^2 = \mbox{
}&2\textrm{tr}[\mathbf{I}]-\textrm{tr}\left[ \mathbf{\hat{V}}_i^H
\mathbf{VD} + (\mathbf{\hat{V}}_i^H
\mathbf{VD})^H\right]\label{eq:proof_frobenius_norm1}\\
=\mbox{ }&2N-2\textrm{tr}\left[\Re [\mathbf{\hat{V}}_i^H
\mathbf{VD}]\right]\nonumber\\
=\mbox{ }& 2N-2\Re \left[\sum\limits_{k=1}^
{N}\mathbf{\hat{v}}_{ik}^H\mathbf{v}_{k}e^{j\theta_k}\right]
\label{eq:proof_frobenius_norm2}
\end{align}
where $\mathbf{D} =$ diag$(e^{j\theta_1}, e^{j\theta_2}, \ldots,
e^{j\theta_N})$, $\mathbf{\hat{v}}_{ik}$ and $\mathbf{v}_k$
correspond to the $k^{th}$ column of $\mathbf{\hat{V}}_i$ and
$\mathbf{V}$, respectively. Minimizing
(\ref{eq:proof_frobenius_norm1}) is equivalent to maximizing the
second term in (\ref{eq:proof_frobenius_norm2}). It is easy to see
that the optimal value of $\theta_k$  maximizing the summation in
(\ref{eq:proof_frobenius_norm2}) is
\begin{align}
\theta_k^{opt}=-\phi_k \quad \quad{  }k=1,2,\ldots,N
\label{eq:proof6}
\end{align}
where $0\leq\phi_k<2\pi$ is the phase of
$\mathbf{\hat{v}}_{ik}^H\mathbf{v}_k=|\mathbf{\hat{v}}_{ik}^H\mathbf{v}_k|e^{j\phi_k}$.
\end{proof}

Proposition 1 results in the following optimal selection criterion
in the Euclidean sense.\\

\emph{Selection Criterion - Optimal Euclidean (SC-OE)} :  The
receiver selects $\mathbf{V}_L$ such that
\begin{align}
\mathbf{V}_L=\underset{\mathbf{\hat{V}}_i \in
\mathbb{V}}{\arg\min} \mbox{ }|| \mbox{
}\mathbf{VD}^{opt}-\mathbf{\hat{V}}_i\mbox{ }||_F^2.
\label{eq:Eclidean_Voptimal}
\end{align}

Note that, in (\ref{eq:Eclidean_Voptimal}), $\mathbf{D}^{opt}$
depends on both $\mathbf{V}$ and $\mathbf{\hat{V}}_i$. Employing
(\ref{eq:Eclidean_Voptimal}), one can apply the well-known
generalized Lloyd algorithm \cite{lindeLGB} to design an optimum
codebook $\mathbb{V}$. The resulting codebook can then be used
together with (\ref{eq:Eclidean_Voptimal}), as a limited-rate CSIT
BICMB system. To that end, we will need centroids for the
generalized Lloyd algorithm. We will calculate these new centroids
in the next subsection.

\subsection{Codebook Design}\label{sec:Codebook_design}
Our codebook design is based on generalized Lloyd's algorithm
\cite{lindeLGB}. We will minimize the average distortion
\begin{align}
J= E\left[\underset{\mathbf{\hat{V}}_i \in \mathbb{V}}{\min}\mbox{
} || \mbox{ }\mathbf{VD}^{opt}-\mathbf{\hat{V}}_i\mbox{
}||_F^2\right] \label{eq:VQ_distortion}.
\end{align}

Here, the distortion measure we intend to use is
\begin{align}
d(\mathbf{\hat{V}}_i,\mathbf{V}) = ||\mbox{
}\mathbf{V}-\mathbf{\hat{V}}_i\mbox{ }||_F^2.
\label{eq:VQ_distortion2}
\end{align}
But, due to the previous discussion, we need to calculate the
distortion between each $\mathbf{\hat{V}}_i$ and the whole set
$\mathbb{S}_{\mathbf{V}}$. As a result, we employ
\begin{align}
d_1(\mathbf{\hat{V}}_i,\mathbf{V}) = || \mbox{
}\mathbf{VD}^{opt}-\mathbf{\hat{V}}_i\mbox{ }||_F^2
\label{eq:VQ_distortion3}
\end{align}
due to the nonuniqueness property of SVD. We assume that $B$ bits
are reserved for the limited feedback link to quantize the optimal
beamforming matrix. In this algorithm, we will begin with an
initial codebook of matrices
$\mathbb{\tilde{V}}_{0}=\{{\mathbf{\tilde{V}}_{0,k}}\}_{k=1}^{2^B}$
and iteratively improve it to generate a set of matrices
$\mathbb{\tilde{V}}_{m}=\{{\mathbf{\tilde{V}}_{m,k}}\}_{k=1}^{2^B}$
until the algorithm converges. The algorithm can be summarized by
the following steps:

1) Generate a large training set of channel matrices,
$\mathbf{H}(n)$ and their corresponding right singular matrices
$\mathbf{V}(n)$. Let $\Psi$ be the set of all $\mathbf{V}(n)$s.

2) Generate an initial codebook of unitary matrices,
$\mathbb{\tilde{V}}_{0}=\{{\mathbf{\tilde{V}}_{0,k}}\}_{k=1}^{2^B}$.

3) Set $m=1$.

4) Partition the set of training matrices into $P=2^B$
quantization regions where the $k^{th}$ region is defined as
\begin{align}
\mathbb{X}_k = \{\mathbf{V} \in \Psi |\mbox{ } ||\mbox{
}\mathbf{VD}^{opt}-\mathbf{\tilde{V}}_{m-1,k}\mbox{ }||_F^2 \leq
||\mbox{ }\mathbf{VD}^{opt}-\mathbf{\tilde{V}}_{m-1,l}\mbox{
}||_F^2\mbox{ }\mbox{ } \forall \mbox{ }k\neq
l\}\label{eq:VQ_partition}
\end{align}

5) Using the given partitions, construct a new codebook
$\tilde{\mathbb{V}}_m$, with the $k^{th}$ beamforming matrix being
\begin{align}
\mathbf{\tilde{V}}_{m,k}=  \underset{\mathbf{\hat{V}} : \mbox{ }
\mathbf {\hat{V}}^H\mathbf{\hat{V}=I}}{\arg\min} E \left[ ||\mbox{
}\mathbf{VD}^{opt}-\mathbf{\hat{V}} \mbox{ } ||_F^2 \mbox{ }|
\mbox{ } \mathbf{V} \in \mathbb{X}_k\right].
\label{eq:codebook_opt}
\end{align}

6) Define
\begin{align}
J_m = \sum\limits_{i=1}^{2^B}\sum
\limits_{n:\mathbf{V}(n)\rightarrow\mathbf{\tilde{V}}_{m,i}}||\mbox{
}\mathbf{V}(n)\mathbf{D}^{opt}-\mathbf{\tilde{V}}_{m,i}\mbox{
}||_F^2 \label{eq:VQ_ave_distortion_m}
\end{align}
where $\mathbf{V}(n)\rightarrow\mathbf{\tilde{V}}_{m,i}$ means
$\mathbf{\tilde{V}}_{m,i}=\underset{\mathbf{\hat{V}}_j\in\tilde{\mathbb{V}}_m}{\arg\min}\mbox{
}d_1(\mathbf{\hat{V}}_j,\mathbf{V}(n))$ . If
$(J_{m-1}-J_m)/J_{m-1}>\epsilon$, set $m=m+1$ and go back to Step
4. Otherwise, terminate the algorithm and set the codebook
$\mathbb{V}= \tilde{\mathbb{V}}_m$.

The optimal solution of the optimization problem in
(\ref{eq:codebook_opt}) gives the optimal centroid for the
corresponding region. The distortion measure to be minimized can
be rewritten as
\begin{align}
||\mbox{ }\mathbf{VD}^{opt}-\mathbf{\hat{V}}_i\mbox{ }||_F^2
=\mbox{ }&2N-2\textrm{tr}\left[\Re [\mathbf{\hat{V}}^H
\mathbf{VD}^{opt}]\right]\nonumber\\
=\mbox{ }& 2N-2\Re \left[\sum\limits_{s=1}^
{N}\mathbf{\hat{v}}_s^H\mathbf{v}_se^{j\theta_s^{opt}}\right]\nonumber\\
=\mbox{ }& 2N- 2\sum\limits_{s=1}^
{N}|\mathbf{\hat{v}}_s^H\mathbf{v}_s|\label{eq:VQ_frobenius_norm}
\end{align}
where (\ref{eq:VQ_frobenius_norm}) follows by using the optimal
$\theta_s^{opt}$ previously derived in (\ref{eq:proof6}).
Therefore the original optimization problem in
(\ref{eq:codebook_opt}) can be rewritten as
\begin{align}
\mathbf{\tilde{V}}_{m,k}=  \underset{\mathbf{\hat{V}} : \mbox{ }
\mathbf{\hat{V}}^H\mathbf{\hat{V}=I}}{\arg\max} E \left[
\sum\limits_{s=1}^ {N}|\mathbf{\hat{v}}_s^H\mathbf{v}_s|\mbox{ } |
\mbox{ } \mathbf{V} \in \mathbb{X}_k\right].
\label{eq:codebook_opt2}
\end{align}

The maximization problem above does not have a tractable
analytical solution. Next, we will modify the problem to find an
approximate analytical solution. Note that the expectation in
(\ref{eq:codebook_opt2}) can be written as the sum of expectation
of each term due to the linearity of the expectation operation. We
will relax the unitary constraint on $\mathbf{\hat{V}}$ and
replace the constraint with having unit norm columns. In this
case, the modified optimization problem is equivalent to finding
independent optimal vectors which maximize each expectation in
(\ref{eq:codebook_opt2}). The individual maximization problem
becomes
\begin{align}
\mathbf{\tilde{e}}_{m,k}^{(i)}=  \underset{\mathbf{\hat{e}} :
\mbox{ } ||\mathbf{\hat{e}}||_2^2=1}{\arg\max} \mbox{ }E \left[
|\mathbf{\hat{e}}^H\mathbf{v}_i|\mbox{ } | \mbox{ } \mathbf{v}_i
\in \mathbb{X}_k^{(i)}\right] \quad
i=1,2,\ldots,N\label{eq:codebook_vec_opt}
\end{align}
where $\mathbb{X}_k^{(i)}$ corresponds to the space of the
$i^{th}$ column of the elements in $\mathbb{X}_k$. The optimal
solution for (\ref{eq:codebook_vec_opt}) is \cite{rohVQ_SB}
\begin{align}
\mathbf{\tilde{e}}_{m,k}^{(i)}= \mbox{ principal eigenvector of }
E \left[\mathbf{v}_i \mathbf{v}_i^H|\mbox{ } \mathbf{v}_i \in
\mathbb{X}_k^{(i)}\right] \label{eq:codebook_vec_opt_solution}
\end{align}
where the numerical averaging over $\mathbb{X}_k^{(i)}$ is
substituted for expectation during codebook design. Let
$\mathbf{\tilde{E}}_{m,k}$ be the matrix whose columns are found
from (\ref{eq:codebook_vec_opt_solution}), maximizing the
expectation in (\ref{eq:codebook_vec_opt}) and approximating the
maximization in (\ref{eq:codebook_opt2}). Note that this matrix is
not necessarily unitary, therefore to find the centroid we will
utilize Euclidean projection to find the closest unitary matrix as
follows
\begin{align}
\mathbf{\tilde{V}}_{m,k}=\underset{\mathbf{\hat{V}} : \mbox{ }
\mathbf{\hat{V}}^H\mathbf{\hat{V}=I}}{\arg\min} ||\mbox{
}\mathbf{\tilde{E}}_{m,k}-\mathbf{\hat{V}}\mbox{ }||_F^2.
\label{eq:codebook_closest_unitary}
\end{align}
The closest unitary matrix can be found in closed form as
\cite{horn}
\begin{align}
\mathbf{\tilde{V}}_{m,k}=\mathbf{\tilde{U}}\mathbf{\tilde{W}}^H
\label{eq:codebook_closest_unitary2}
\end{align}
where
$\mathbf{\tilde{E}}_{m,k}=\mathbf{\tilde{U}\tilde{\Sigma}}\mathbf{\tilde{W}}^H$.

The approach explained above to find the centroid in each region
reduces to the optimal solution for the single beamforming case.
Although it may be suboptimal for the multiple beamforming case,
the centroid found from (\ref{eq:codebook_closest_unitary2})
enables the algorithm to have monotonic decrease in average
distortion given by (\ref{eq:VQ_distortion}) in each iteration and
to converge to a local minimum, as shown in Figure
\ref{fig:BICMB_average_dist} for a $2\times2$ scenario with 2
streams and 4-bit feedback.

\begin{figure}[pg6]
\centering \includegraphics[width =0.7
\linewidth]{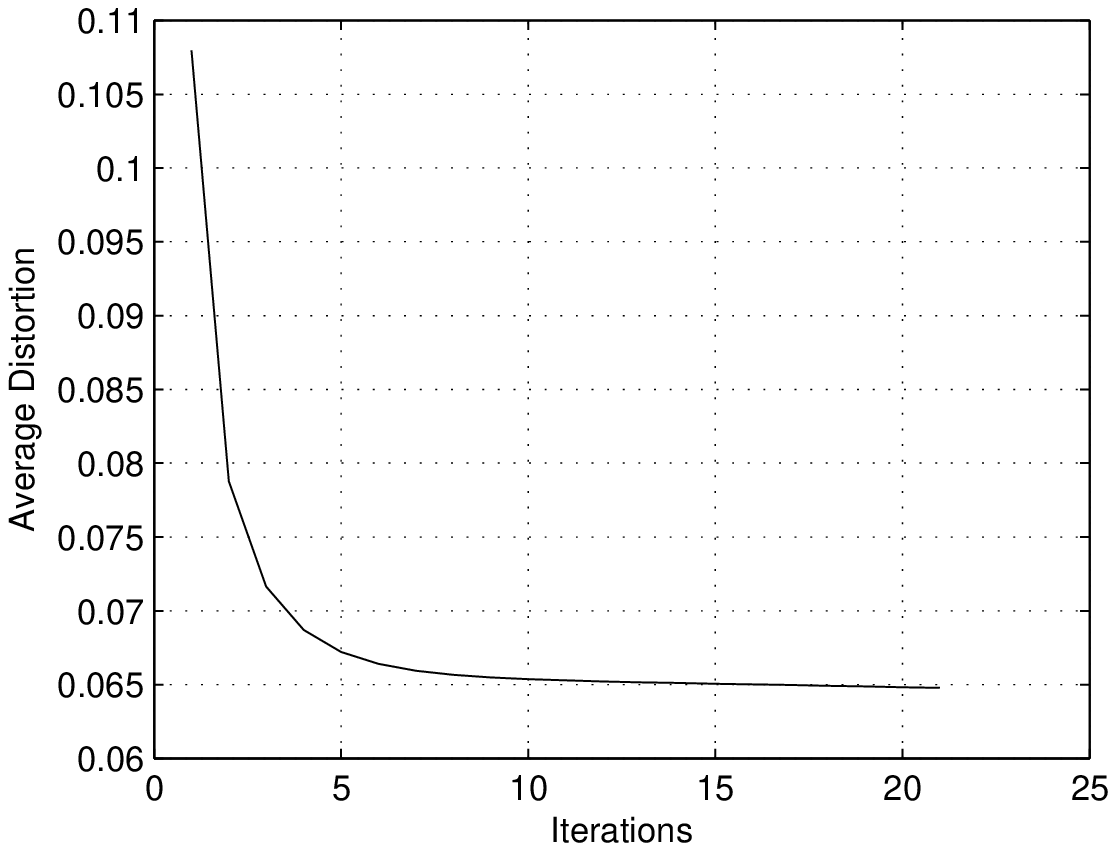} \caption{Average
distortion for $2\times2$ system with 4-bit feedback.}
\label{fig:BICMB_average_dist}
\end{figure}

\subsection{The Receiver}\label{sec:Detectors_BICMB_LF}

We will first discuss the ZF and MMSE receivers and then
describe a receiver based on SVD. We show in the appendix that the performance of these three decoders is the same when there is perfect CSIT.\\

1) \emph{ZF Receiver}

When there is only limited feedback for the quantization of
$\mathbf{V}$, i.e., $\mathbf{V}_L$ is used as the precoder, the
diagonalization of the channel will be lost and with the ZF
detector, the system input-output relation becomes;
\begin{align}
\mathbf{r} =\mathbf{Gy} = \mathbf{x}+\mathbf{Gn}
\label{eq:InOutBeamLFZF}
\end{align}
where $\mathbf{G}=
(\mathbf{HV}_L)^\dag=[(\mathbf{HV}_L)^H(\mathbf{HV}_L)]^{-1}(\mathbf{HV}_L)^H$.
In this case, we will use the following bit metrics
\cite{butler_ZFbitmetrics},
\begin{align}
\gamma^i (r_{k,s},c_{k'}) =  \min\limits_{x \in \chi_{c_{k'}}^i}
\frac{|r_{k,s} - x|^2}{{||\mathbf{g}_s||}^2}
\label{eq:BitMetricsZF}
\end{align}
where $r_{k,s}$ is the received signal after equalization at time
$k$ on the $s^{th}$ stream and $\mathbf{g}_s$ is the $s^{th}$
column of $\mathbf{G}^T$.

In the perfect CSIT case, where the channel right singular matrix
$\mathbf{V}$ is perfectly known at the transmitter, the bit
metrics (\ref{eq:BitMetricsZF}) of the ZF receiver are equal to
that of the optimum BICMB receiver. The proof is provided in the
Appendix.\\

 2) \emph{MMSE Receiver}

MMSE detector is a superior solution to the linear detection problem
which balances ISI against noise enhancement. The corresponding
input-output relation is given by (\ref{eq:InOutBeamLFZF}), where
now $\mathbf{G}$ is given by
\begin{align}
\mathbf{G} =
[\mathbf{(HV}_L)^H\mathbf{HV}_L+\sigma^2\mathbf{I}]^{-1}\mathbf{(HV}_L)^H
\label{eq:LFMMSE_detector}
\end{align}
and where $\sigma^2=N/SNR$ from the system model given in Section
\ref{sec:QBICMBsystem}. We will use the following bit metrics
\cite{seethalerMMSE}
\begin{align}
\gamma^i (r_{k,s},c_{k'}) =  \min\limits_{x \in \chi_{c_{k'}}^i}
\frac{{W_{ss}}}{1-{W_{ss}}}\left|\frac{r_{k,s}}{{W_{ss}}}-x\right|^2
\label{eq:BitMetricsMMSE}
\end{align}
where
$\mathbf{W}=[\mathbf{I}+\sigma^2[\mathbf{(HV_L})^H\mathbf{HV_L}]^{-1}]^{-1}$
and $W_{ss}$ is the $s^{th}$ diagonal element of $\mathbf{W}$.

In the perfect CSIT case, the MMSE receiver
is equivalent to the optimum BICMB receiver. The proof is
provided in the Appendix.\\

3) \emph{SVD Receiver}

In the case of perfect knowledge of $\mathbf{V}$ at the
transmitter, the receiver can use the $\mathbf{U}^H$ matrix to
diagonalize the channel, where $\mathbf{H = U\Sigma V}^H$. In the
case of limited feedback, the $\mathbf{U}^H$ matrix can still be
used as an equalizer \cite{sadrabadiGivens},
\cite{mielczarek_VQ05}. In this section, we will provide a linear detector which performs the same as the $\mathbf{U}^H$ detector with lower complexity. Note that, we proposed an optimum
selection criterion in (\ref{eq:Eclidean_Voptimal}) which is
needed because of the nonuniqueness property of SVD. The optimized
selection criterion aims to quantize $\mathbf{VD}^{opt}$ instead
of $\mathbf{V}$. Each element of the diagonal unitary matrix
$\mathbf{D}^{opt}$ can be found from (\ref{eq:proof6}) and it is
dependent on the codebook elements and the instantaneous channel
realization. From Theorem 1, it is easy to see that there is a
unique matching left singular matrix for $\mathbf{VD}^{opt}$,
which can be used as a detector. Therefore the corresponding
linear equalizer matrix is
\begin{align}
\mathbf{G} = (\mathbf{UD}^{opt})^H.\label{eq:proposed_detector}
\end{align}

In this case, when $\mathbf{V}_L$ is used as a precoder at the
transmitter, the baseband system input-output relation is
\begin{align}
\mathbf{r} &= \mathbf{GHV}_L\mathbf{x}+\mathbf{Gn}\\
&=(\mathbf{D}^{opt})^H\mathbf{\Sigma
V}^H\mathbf{V}_L\mathbf{x}+\mathbf{n'} \label{eq:InOutBeamLF_U}
\end{align}
where in (\ref{eq:InOutBeamLF_U}) $\mathbf{H}$ is replaced by its
SVD. Note that because $\mathbf{G}$ is a unitary transformation the
noise vectors $\mathbf{n'}$ and $\mathbf{n}$ have the same
statistics. Then the input-output relation for the $s^{th}$ stream
becomes
\begin{align}
r_s &=\lambda_se^{-j\theta_s^{opt}}\sum\limits_{i=1}^S \mbox{ }
\mathbf{v}_s^H
\mathbf{v}_{L,i}x_i+n'_s\\
&=\lambda_se^{-j\theta_s^{opt}}\mathbf{v}_s^H\mathbf{v}_{L,s}x_s+
\lambda_se^{-j\theta_s^{opt}}\sum\limits_{i=1, \mbox{ }i \neq s}^S
\mbox{ } \mathbf{v}_s^H \mathbf{v}_{L,i}x_i +
n'_s \\
&=\lambda_s|\mathbf{v}_s^H\mathbf{v}_{L,s}|x_s+
\lambda_se^{-j\theta_s^{opt}}\sum\limits_{i=1, \mbox{ }i \neq s}^S
\mbox{ } \mathbf{v}_s^H \mathbf{v}_{L,i}x_i + n'_s.
\label{eq:InOutBeamLF_UTheta}
\end{align}
Note that the first term has the desired signal, the second term is
interference from other streams, and the third term is noise. The
transmitted symbols $x_i$ are typically from symmetric
constellations. Therefore, the mean of $x_i$ is zero. As discussed
previously, we normalize its variance to 1. Due to bit interleaving,
$x_i$, $i=1,2,\ldots,S$ are uncorrelated. For a given channel
realization, (\ref{eq:InOutBeamLF_UTheta}) can be written in a
compact form as
\begin{align}
r_s &=\tilde{\lambda}_sx_s+\tilde{n}_s \label{eq:compact_ReceiverU}
\end{align}
where $\tilde{\lambda}_s=\lambda_s|\mathbf{v}_s^H\mathbf{v}_{L,s}|$
and $\tilde{n}_s$ is approximated as a zero-mean complex Gaussian
random variable with variance
$\tilde{\sigma}_s^2=\lambda_s^2\sum_{i=1,i\neq
s}^{S}|\mathbf{v}_s^H\mathbf{v}_{L,i}|^2+N/SNR$. We determined
through simulations that the Gaussian approximation is highly
accurate for low and intermediate SNR values (e.g., 15 dB) or when
the number of feedback bits is beyond 4. Although for large SNR
(e.g., 30 dB), the approximation is less accurate, as the feedback
rate increases, the accuracy loss diminishes independent of SNR. In
addition, this approximation enables a very simple bit-metric
calculation similar to the perfect CSIT case.

Let $\chi_b^i$ denote the subset of all signals $x \in \chi$ whose
label has the value $b \in \{0,1\} $ in position $i$. The bit
metrics for (\ref{eq:compact_ReceiverU}) are given by \cite{bicm}
\begin{align}
\gamma^i (r_{k,s},c_{k'}) =  \min\limits_{x \in \chi_{c_{k'}}^i}
\frac{|r_{k,s}-\tilde{\lambda}_sx|^2}{\tilde{\sigma}_s^2}
\label{eq:BitMetricsUheta}.
\end{align}

In the sequel, we will call the receiver proposed in this section
as the SVD receiver. We will show in the next section that the
performance of the SVD receiver is close to that of the MMSE
receiver for the $2\times 2$ MIMO system. The advantage of the SVD
receiver over the MMSE receiver is its relative simplicity since
it avoids the matrix inversions needed in
(\ref{eq:LFMMSE_detector}) and (\ref{eq:BitMetricsMMSE}). One can
observe that when the limited feedback rate is low, the
interference term may dominate the noise term, which may result in
poor performance. We emphasize that the optimum receiver with a linear detector
for the limited-rate CSIT system described in the previous section
is the MMSE receiver. However, the SVD receiver is a simpler one
with a performance tradeoff against the MMSE receiver
while consistently outperforming the ZF receiver.\\

\section{Simulation Results} \label{sec:Results}

In the simulations below, the industry standard 64-state 1/2-rate
(133,171) $d_{free}=10$ convolutional code is used and the
constellation is 16-QAM. As in all similar work, the channel is
assumed to be quasi-static and flat fading.

Figure \ref{fig:BICMB_LF_figure1} illustrates that in the case of
BICMB, the precoder matrix $\mathbf{V}$ is not invariant under a
general unitary matrix transformation. As discussed previously,
assumption of this invariance results in the Grassmannian codebook
design approach studied widely in the literature
\cite{love03LF-SM}. Again, as discussed
previously, most of the work in the literature is for uncoded
systems where invariance under a general unitary matrix
transformation follows from the use of optimization criterion such
as MSE, SNR, or mutual information. All curves in this figure
employ BICMB with ZF receiver, while the solid ones employ the
$\mathbf{V}$ matrix given by SVD of $\mathbf{H}$, those with
broken lines employ $\mathbf{V}'=\mathbf{VQ}$ where $\mathbf{Q}$
is a $2\times2$ DFT matrix, which is unitary. Clearly, BICMB
performance is not invariant under a general unitary matrix
transformation.

Figure \ref{fig:BICMB_LF_figure2} shows a number of different
systems to illustrate the improvement due to the new selection
criterion (\ref{eq:Eclidean_Voptimal}). This selection criterion is
compared to the one that maximizes the minimum eigenvalue
($\lambda_{min}$) of $\mathbf{HV}_i$. This method is employed in
\cite{love03LF-SM} with the ZF receiver. In order to show that there
is a gain due to (\ref{eq:Eclidean_Voptimal}), we use the ZF
receiver in our system as well. Systematic generation of codebooks
\cite{hochwald99} with a selection criterion that maximizes
$\lambda_{min}$ is used for the curves with legend
SC-$\lambda_{min}$ and the randomly generated codebook with the
selection criterion in (\ref{eq:Eclidean_Vopt}) and
(\ref{eq:Eclidean_Vopt2}) is used for the curves with legend SC-OE.
As can be seen, the performance is improved significantly with the
proposed approach, and with $\lambda_{min}$ approach, the
performance saturates with increasing the number of bits.

Figure \ref{fig:BICMB_LF_figure3} compares (\ref{eq:Eclidean_V})
and (\ref{eq:Eclidean_Voptimal}) employing two receiver
structures: ZF and MMSE. The codebook employed is randomly
generated. There is clearly a significant gain due to
(\ref{eq:Eclidean_Voptimal}) for both receivers. In Figure
\ref{fig:BICMB_LF_figure4} the performance of the SVD receiver is
compared with the ZF and MMSE receivers for the $2\times 2$
scenario with 2 streams. All curves in the figures use the
optimized Euclidean criterion with a randomly generated codebook.
The SVD receiver, which exploits the nonuniqueness of SVD both at
the transmitter and the receiver, significantly outperforms the ZF
receiver and achieves a performance very close to the MMSE
receiver for the 8-bit scenario. Note that, the overall complexity
of the system with the SVD receiver is less than the one with the
MMSE receiver. When the number of feedback bits is 8 for the
$2\times2$ case, it achieves a performance 0.25 dB close to the
unquantized system.

Figure \ref{fig:BICMB_LF_figure8} shows the simulation results for
various receivers in a $2\times 2$ system when the codebook is
designed using the VQ algorithm discussed in Section
\ref{sec:Codebook_design}. All curves use the optimal Euclidean
criterion. As seen from the figure, the performance of the
randomly generated codebook (RVQ) can be significantly improved
for all receivers. To illustrate, the performance of MMSE 8-bit
RVQ and 6-bit VQ are very close to each other, therefore 2 bit
reduction is achieved via the proposed codebook design. A similar
reduction can be observed for the SVD receiver. On the other hand,
for the same number of feedback bits, 2 dB performance gain is
achievable for the ZF receiver. Note that there is significant
performance degradation when the ZF receiver is used for both RVQ
and VQ scenarios compared to
the MMSE and SVD receivers.\\

\begin{figure}[pg19]
\centering \includegraphics[width = 0.75
\linewidth]{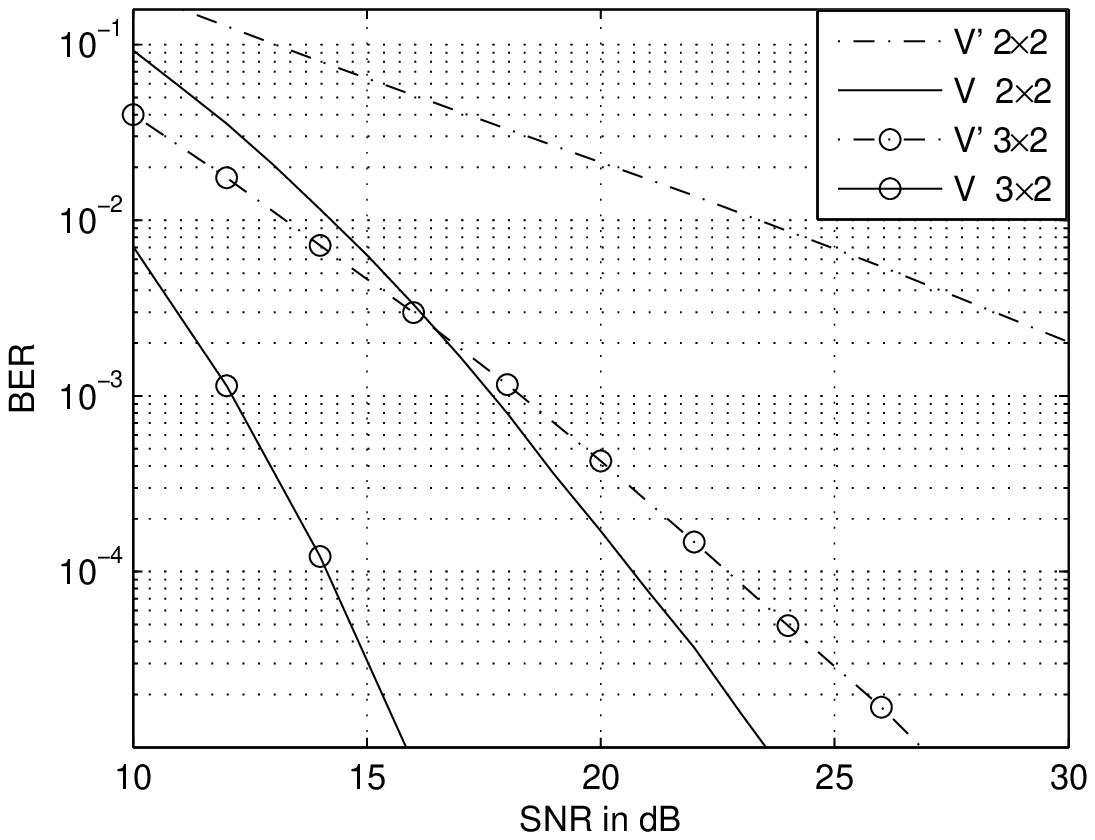} \caption{The
precoding matrix given by SVD $\mathbf{V}$ vs.
$\mathbf{V}'=\mathbf{VQ}$ where $\mathbf{Q}$ is a DFT matrix $2
\times 2$ and $3 \times 2$ system with 2 streams.}
\label{fig:BICMB_LF_figure1}
\end{figure}

\begin{figure}[m]
\centering \includegraphics[width = 0.75
\linewidth]{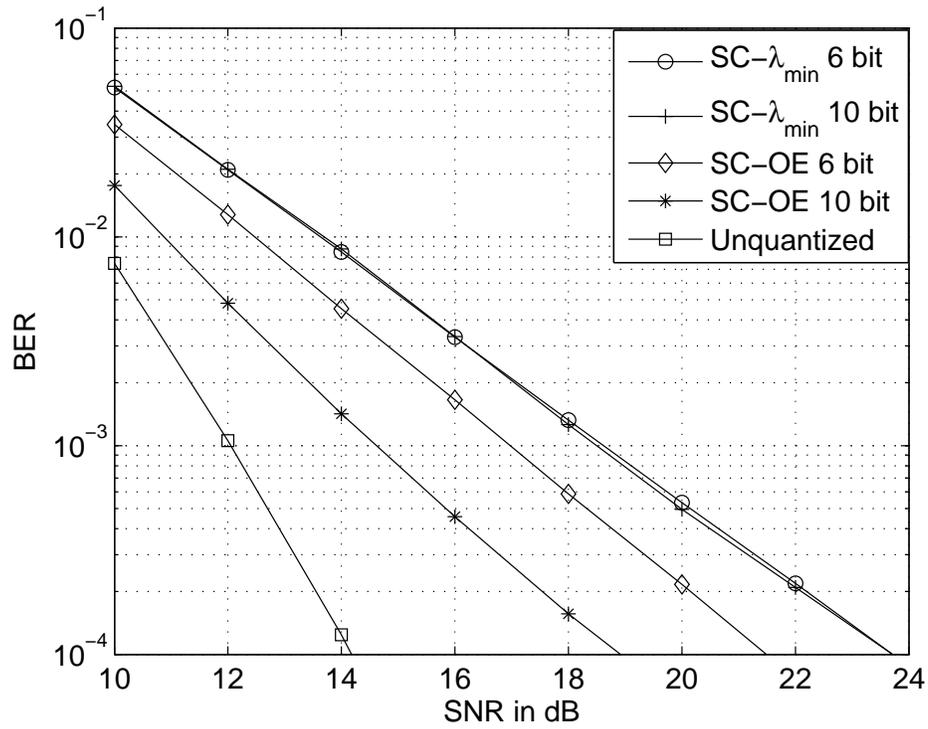}
\caption{New selection criterion vs. $\lambda_{min}$-based
selection criterion $3 \times 2$ system with 2 streams.}
\label{fig:BICMB_LF_figure2}
\end{figure}

\begin{figure}[m]
\centering \includegraphics[width = 0.75
\linewidth]{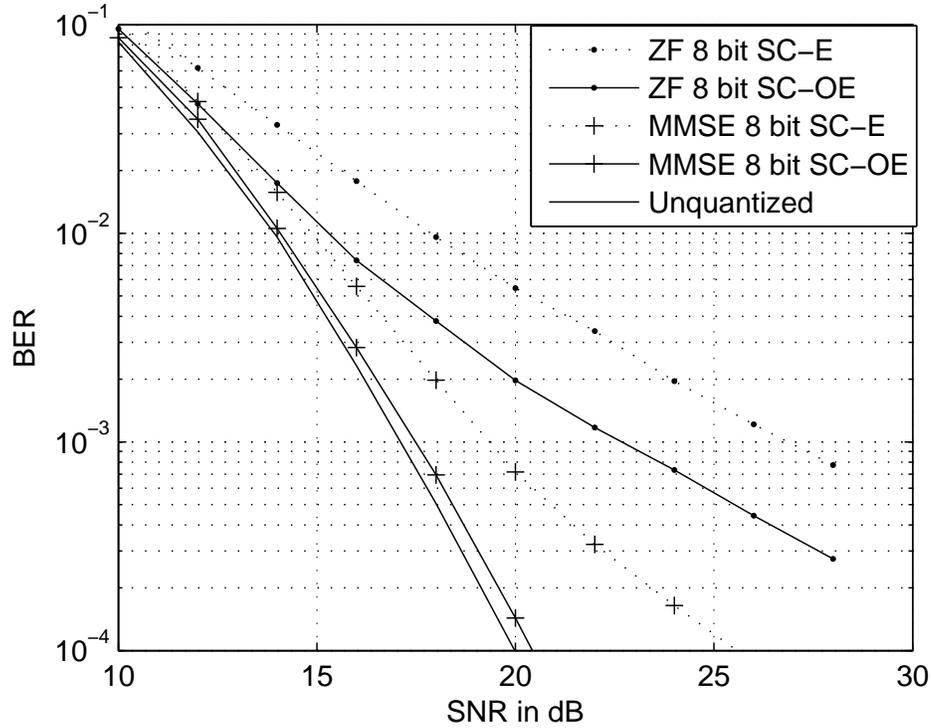}
\caption{Comparison SC-E (\ref{eq:Eclidean_V}) and SC-OE
(\ref{eq:Eclidean_Voptimal}) $2 \times 2$ system with 2 streams.}
\label{fig:BICMB_LF_figure3}
\end{figure}

\begin{figure}[m]
\centering \includegraphics[width = 0.75
\linewidth]{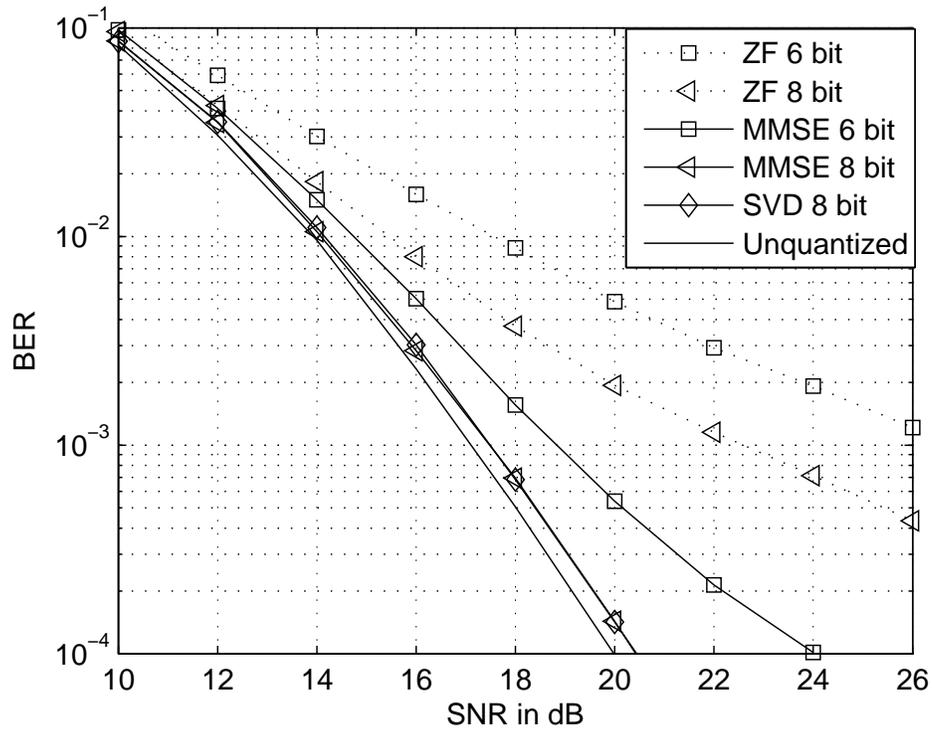} \caption{The SVD
receiver vs. MMSE and ZF receivers using SC-OE with randomly
generated codebook $2 \times 2$ system with 2 streams.}
\label{fig:BICMB_LF_figure4}
\end{figure}

\begin{figure}[m]
\centering \includegraphics[width =
0.75\linewidth]{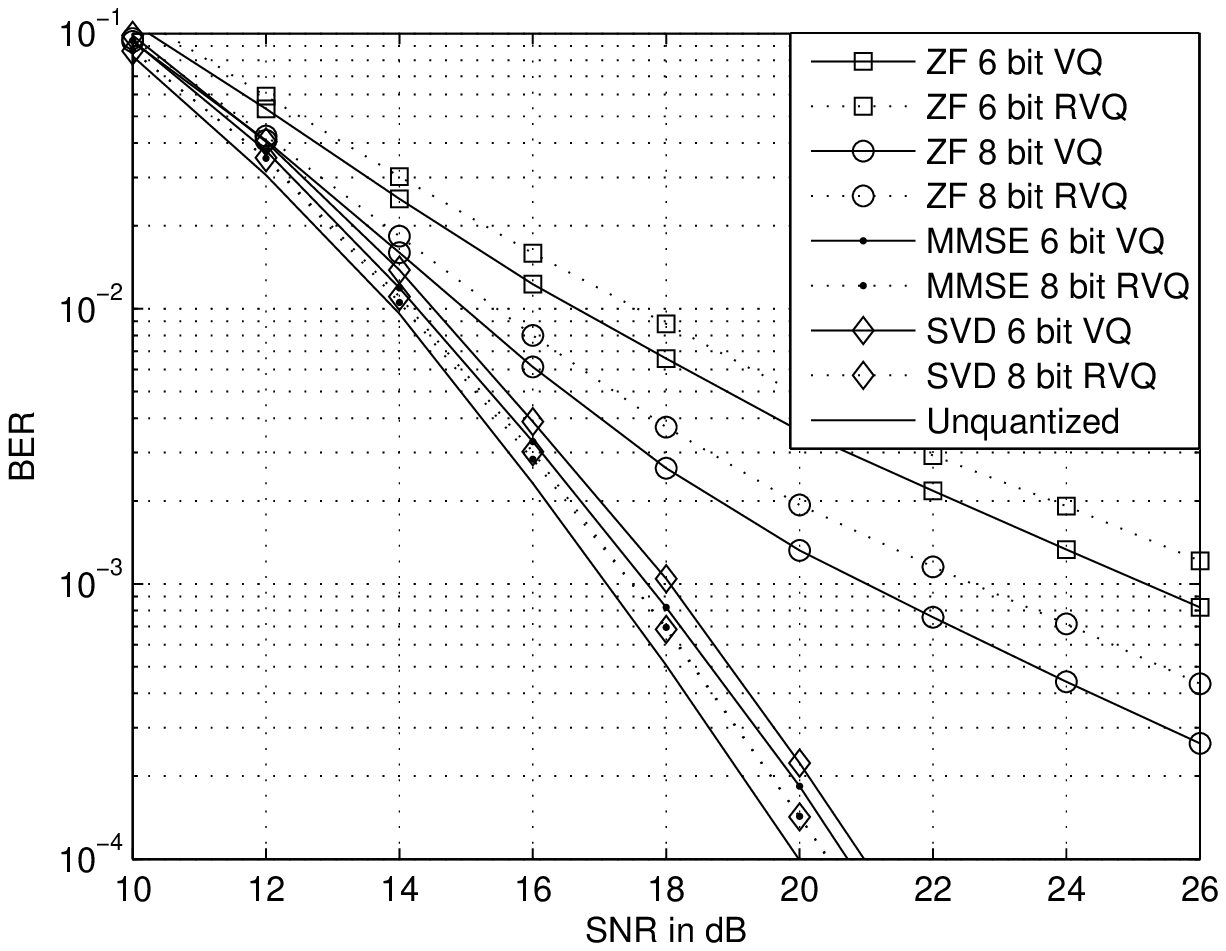} \caption{The
SVD receiver vs. MMSE and ZF receivers with RVQ and VQ using SC-OE
$2 \times 2$ system with 2 streams.} \label{fig:BICMB_LF_figure8}
\end{figure}

\section{Conclusion}\label{sec:concl}

BICMB is a high-performance and low-complexity broadband wireless
system with full spatial multiplexing and full diversity. However,
the system requires perfect knowledge of the channel right
singular vectors, which is not practical in a real environment.

This paper addressed the performance of BICMB with limited CSIT
feedback using a codebook-based approach. We proposed a new
optimal distortion measure for selecting the best precoder from a
given codebook. The centroids for this distortion measure are
calculated. Codebook design is performed via the generalized Lloyd
algorithm based on the new distortion measure and the new
centroids. We provided simulation results demonstrating the
performance improvement achieved with the proposed distortion
measure for various receivers with linear detectors.\\

\appendix
\section{Appendix}\label{sec:appn}

In the perfect CSIT case, the transmitter uses the right
singular matrix $\mathbf{V}$ as the precoding matrix
$\mathbf{V}_L$. The $N \times S$ precoding matrix can be expressed
as $\mathbf{V}_L = \mathbf{V \Phi}_N$, where the $N \times S$
matrix $\mathbf{\Phi}_N$ is used to select the first $S$ columns
of $\mathbf{V}$, defined as
\begin{displaymath}
\mathbf{\Phi}_N = \left[ \begin{array}{ccc} \mathbf{I}_{S}\\
\hline \mathbf{0}_{N-S,S} \end{array} \right],
\end{displaymath}
and $\mathbf{0}_{N-S,S}$ is an $(N-S) \times S$ matrix whose
elements are all zeros. Therefore, the system input-output
relation in (\ref{eq:InOutBeamLF}) can be written as
\begin{align}
\mathbf{y} &= \mathbf{U \hat{\Sigma} x}+\mathbf{n},
\label{eq:InOutBeamPerfect}
\end{align}
where $\mathbf{\hat{\Sigma}}$ is defined as
\begin{displaymath}
\mathbf{\hat{\Sigma}} = \mathbf{\Sigma \Phi}_N = \left[ \begin{array}{ccc} \mathbf{\Sigma}_{S} \\
\hline \mathbf{0}_{M-S,S} \end{array} \right],
\end{displaymath}
and $\mathbf{\Sigma}_{S}$ is an $S \times S$ square matrix whose
elements are taken from the largest $S$ singular values of $\mathbf{H}$.\\

1) \emph{BICMB Receiver}

The optimum detector for the BICMB receiver is the corresponding
left singular matrix $\mathbf{U}^H$. The baseband input-output
relation for each subchannel becomes \cite{akayTC06BICMB}
\begin{align}
r'_{k,s} = \lambda_s x_{k,s} + n_{k,s}
\label{eq:InOutBeam_BICMB-U}
\end{align}
for $s=1, 2, \ldots, S$ where $\lambda_s$ is the $s^{th}$ channel
singular value and $r'_{k,s}$ is the detected symbol of the
$s^{th}$ subchannel at the $k^{th}$ time instant which is defined
as in (\ref{eq:MLrule}). Then, the following ML bit metrics for
the BICMB soft input Viterbi decoder are used \cite{akayWCNC06},
\cite{akayTC06BICMB}
\begin{equation}
\gamma_{BICMB}^i (r'_{k,s},c_{k'}) =  \min\limits_{x \in
\chi_{c_{k'}}^i} |r'_{k,s} - \lambda_s x|^2
\label{eq:BitMetrics}
\end{equation}
where $k'$ is defined as in (\ref{eq:MLrule}).\\

2) \emph{ZF Receiver}

After the ZF detector, the system input-output relation becomes
\begin{align}
\mathbf{r}_{ZF} =\mathbf{\hat{\Sigma}}^{\dag} \mathbf{U}^H
\mathbf{y} = \mathbf{x} + \mathbf{\hat{\Sigma}}^{\dag}
\mathbf{U}^H \mathbf{n} \label{eq:InOutBeamPerfectZF}
\end{align}
where $\mathbf{G}$\ in (\ref{eq:InOutBeamLFZF}) is replaced by
($\mathbf{HV}_L)^{\dag}$ = ($\mathbf{U \Sigma V}^H \mathbf{V
\Phi}_N)^{\dag}$ = $\mathbf{\hat{\Sigma}}^{\dag} \mathbf{U}^H$.
Note that the last equality $(\mathbf{AB})^{\dag} =
\mathbf{B}^\dag \mathbf{A}^\dag$ holds if $\mathbf{A}^H \mathbf{A}
= \mathbf{I}$ \cite{greville66SIAM}. Accordingly, the baseband
input-output for each substream becomes
\begin{align}
\hat{r}_{k,s} =  x_{k,s} + \lambda_s^{-1}n_{k,s} =
\lambda_s^{-1}r'_{k,s}, \label{eq:InOutBeam_Perfect-ZF}
\end{align}
where the relation with $r'_{k,s}$ is obvious when
(\ref{eq:InOutBeam_Perfect-ZF}) is compared with
(\ref{eq:InOutBeam_BICMB-U}).

To calculate the $s^{th}$ column of $\mathbf{G}^T$ for metric
calculation in (\ref{eq:BitMetricsZF}), consider $\mathbf{U}
=(\mathbf{u}_1 \vdots \mathbf{u}_2 \vdots \ldots \vdots
\mathbf{u}_M$), where $\mathbf{u}_1, \mathbf{u}_2, \ldots,
\mathbf{u}_M$ are the column vectors of $\mathbf{U}$. Then,
\begin{align}
\mathbf{G}^T = (\mathbf{\hat{\Sigma}}^{\dag} \mathbf{U}^H )^T =
(\mathbf{u}_1^* \vdots \mathbf{u}_2^* \vdots \ldots \vdots \mathbf{u}_M^*) \left[ \begin{array}{ccc} \mathbf{\Sigma}_{S}^{-1} \\
\hline \mathbf{0}_{M-S,S} \end{array} \right].
\end{align}
Therefore, the $s^{th}$ column of $\mathbf{G}^T$ in
(\ref{eq:BitMetricsZF}) is equal to $\lambda_s^{-1} \mathbf{u_s}$,
leading to $\mathbf{{||\mathbf{g}_s||}^2}=1/\lambda_s^{2}$. By
replacing $\mathbf{{||\mathbf{g}_s||}^2}$ and $\hat{r}_{k,s}$ in
(\ref{eq:BitMetricsZF}) with $1/\lambda_s^{2}$ and $\lambda_s^{-1}
r'_{k,s}$, respectively, the bit metrics for the ZF decoder
become
\begin{align}
\gamma_{ZF} ^i (\hat{r}_{k,s},c_{k'}) =  \min\limits_{x \in
\chi_{c_{k'}}^i} |r'_{k,s} - \lambda_s x|^2
\label{eq:BitMetricsPerfectZF}
\end{align}
which are equal to the bit metrics of BICMB in
(\ref{eq:BitMetrics}).\\

3) \emph{MMSE Receiver}

The MMSE detector $\mathbf{G}$ in (\ref{eq:LFMMSE_detector}) with
perfect CSIT becomes
\begin{align}
\mathbf{G} = \mbox{ } & [\mathbf{(HV \Phi}_N)^H\mathbf{HV
\Phi}_N+\sigma^2\mathbf{I}]^{-1}\mathbf{(HV \Phi}_N)^H \nonumber\\
= \mbox{ } & [\mathbf{\hat{\Sigma}}^H \mathbf{\hat{\Sigma}} +
\sigma^2 \mathbf{I}]^{-1} \mathbf{\hat{\Sigma}}^H \mathbf{U}^H \nonumber\\
= \mbox{ } & [\mathbf{\Sigma}_S^2 + \sigma^2 \mathbf{I}]^{-1}
\mathbf{\hat{\Sigma}}^H \mathbf{U}^H. \label{eq:MMSEDetector}
\end{align}
If we define $\mathbf{\Omega}$ as
\begin{align}
\mathbf{\Omega} = \mathbf{\Sigma}_S^2 + \sigma^2 \mathbf{I}
\label{eq:DefOmega}
\end{align}
then, $\mathbf{\Omega}$ is an $S \times S$ diagonal matrix whose
$s^{th}$ diagonal element can be expressed as $\mu_s = \lambda_s^2
+ \sigma^2$. The baseband signal after the MMSE detector given in
(\ref{eq:InOutBeamLFZF}) is
\begin{align}
\mathbf{r}_{MMSE} =\mathbf{\Omega}^{-1} \mathbf{\hat{\Sigma}}^H
\mathbf{U}^H \mathbf{y} = \mathbf{\Omega}^{-1} \mathbf{\Sigma}_S^2
\mathbf{x} + \mathbf{\Omega}^{-1} \mathbf{\hat{\Sigma}}^H
\mathbf{U}^H \mathbf{n} \label{eq:InOutBeamPerfectMMSE}
\end{align}
where $\mathbf{G}$ is replaced by the shortened form of
(\ref{eq:MMSEDetector}) and (\ref{eq:DefOmega}). Since
$\mathbf{\Omega}^{-1}$, $\mathbf{\hat{\Sigma}}$ and
$\mathbf{\Sigma}_S^2$ are all diagonal matrices, the baseband
vector signal can be separated into each subchannel signal,
resulting in the following relation with $r'_{k,s}$ of
(\ref{eq:InOutBeam_BICMB-U}) as
\begin{align}
\tilde{r}_{k,s} = \frac{\lambda_s^2}{\mu_s} x_{k,s} +
\frac{\lambda_s}{\mu_s} n_{k,s} = \frac{\lambda_s}{\mu_s}
r'_{k,s}. \label{eq:InOutBeam_Perfect-MMSE}
\end{align}

The bit metrics in (\ref{eq:BitMetricsMMSE}) require the
calculation of a matrix $\mathbf{W}$. Using an analysis similar to
the MMSE detector, $\mathbf{W}$ can be expressed as
\begin{align}
\mathbf{W} = [\mathbf{I} + \sigma^2
(\mathbf{\Sigma}_S^2)^{-1}]^{-1} \label{eq:W_SigmaS}.
\end{align}
By multiplying $(\mathbf{\Sigma}_S^2)^{-1}$ with the both sides of
(\ref{eq:DefOmega}), we get
\begin{align}
\mathbf{I} + \sigma^2 (\mathbf{\Sigma}_S^2)^{-1} = \mathbf{\Omega}
(\mathbf{\Sigma}_S^2)^{-1} \label{eq:W_Omega}.
\end{align}
Using (\ref{eq:W_SigmaS}) and (\ref{eq:W_Omega}), the $s^{th}$
diagonal element $W_{ss}$ of $\mathbf{W}$ can be easily found as
$W_{ss} = \lambda_s^2 / \mu_s$. Finally, with the help of
simplified $W_{ss}$ and the relation with $r'_{k,s}$ of
(\ref{eq:InOutBeam_Perfect-MMSE}), the bit metrics in
(\ref{eq:BitMetricsMMSE}) become
\begin{align}
\gamma_{MMSE}^i (\tilde{r}_{k,s},c_{k'}) =  \min\limits_{x \in
\chi_{c_{k'}}^i} \frac{1}{\sigma^2} |r'_{k,s}- \lambda_s x
|^2
\end{align}
which are equivalent to the bit metrics of BICMB in
(\ref{eq:BitMetrics}) because the constant $1 / \sigma^2$ can be
ignored in the metric calculation.

\end{document}